\def\@ACM@checkaffil{%
  \if@ACM@instpresent\else
    \ClassWarningNoLine{\@classname}{No institution present for an affiliation}%
  \fi
  \if@ACM@citypresent\else
    \ClassWarningNoLine{\@classname}{No city present for an affiliation}%
  \fi
  \if@ACM@countrypresent\else
    \ClassWarningNoLine{\@classname}{No country present for an affiliation}%
  \fi
}
\begin{document}

\newcommand{\fixme}[1]{{\color{red}\ensuremath{\ll}FIXME: #1\ensuremath{\gg}}}

\newcommand{\prot}{\textsc{Kudzu}}

\newcommand{\abs}[1]{\lvert #1 \rvert}
\newcommand{\ceil}[1]{\lceil #1 \rceil}
\newcommand{\Int}[1]{[#1]}
\newcommand{\var}[1]{\textit{#1}}
\newcommand{\deq}{\coloneqq}
\newcommand{\lit}[1]{\textsf{#1}}
\newcommand{\defn}[1]{\textbf{#1}}
\newcommand{\proc}[1]{\textsf{#1}}
\newcommand{\RM}[1]{_{\text{\rm #1}}}
\newcommand{\true}{\mathsf{true}}
\newcommand{\false}{\mathsf{false}}

\newcommand{\arxiv}[1]{{arXiv:#1, \url{http://arxiv.org/abs/#1}}}

\algrenewcommand\algorithmicloop{\textbf{while} $\lnot \textit{done}$ \textbf{wait until either}}

\newcommand{\Pref}[1]{Protocol~\ref{#1}}

\title{\prot: Fast and Simple High-Throughput BFT}

\author{Victor Shoup}
\email{victor@shoup.net}
\affiliation{%
  \institution{Offchain Labs}
}

\author{Jakub Sliwinski}
\email{kobi@anza.xyz}
\affiliation{%
  \institution{Anza}
}

\author{Yann Vonlanthen}
\email{yvonlanthen@ethz.ch}
\affiliation{%
  \institution{ETH Zurich}
}

\sloppy %

\begin{abstract}
We present \prot, a high-throughput atomic broadcast protocol with an integrated fast path. 
Our contribution is based on the combination of two lines of work. 
Firstly, our protocol achieves finality in  just two rounds of communication if all but $p$ out of $n = 3f + 2p + 1$ participating replicas behave correctly, where $f$ is the number of Byzantine faults that are tolerated. 
Due to the seamless integration of the fast path, even in the presence of more than $p$ faults, our protocol maintains state-of-the-art characteristics. 
Secondly, our protocol utilizes the bandwidth of participating replicas in a balanced way, alleviating the bottleneck at the leader, and thus enabling high throughput. 
This is achieved by disseminating blocks using erasure codes. 
Despite combining a novel set of advantages, \prot\ is remarkably simple: 
intricacies such as ``progress certificates'', complex view changes, and speculative execution are avoided.
\end{abstract}
\maketitle

\pagenumbering{arabic}

\floatname{algorithm}{Protocol}

\section{Introduction}

\label{sec:introduction}
Recent years have seen a remarkable surge in popularity and development of resilient distributed systems. The area of blockchain has become a hotbed of research, where systems akin to decentralized world-computers~\cite{buterin2013ethereum,solana,blackshear2023sui} compete to introduce ever-improving protocols. 
At the heart of any such system is a {\em atomic broadcast} protocol \cite{cachin2001secure},
which allows all replicas in the network to agree on a stream of transactions.

The crucial requirement that these protocols must satisfy is {\em Byzantine fault tolerance (BFT)}~\cite{lamport1982byzantine}, which is the ability of a system composed of $n$ different replicas to continue to function even if some of the replicas fail in arbitrary (potentially adversarial) ways. 
Moreover, the network connecting the replicas can be unreliable, or even controlled by an attacker. 
These harsh conditions meant that early global decentralized systems suffered severe disadvantages compared to centralized counterparts, hindering adoption. Despite the challenging setting, research has continued to improve once infamously slow decentralized protocols, and decentralized systems are closing the performance gap.

One key dimension of atomic broadcast performance is the {\em finalization latency} of new transactions. 
Historically, the protocols with the best finalization latency are 
protocols, such as PBFT \cite{PBFT-short}, where a designated leader proposes a block of transactions,
and which work in the partially synchronous communication model \cite{PartialSynchrony}.
In this model liveness is only guaranteed during periods of time where the network is well behaved,
but correctness (i.e., safety) is guaranteed unconditionally.
For such protocols, it is natural to measure finalization latency
as the amount of time that may elapse between when the leader proposes a block
and when all other replicas finalize that block.
In measuring finalization latency, we assume the leader is honest and the network
is well behaved.
Such protocols assume 
 $n \ge 3f + 1$, 
 where $f$ is a bound on the number of corrupt replicas,
 and achieve finalization latency as low as $3\delta$, 
 where $\delta$ is the longest actual message latency between the replicas of the system. 
However, even $2\delta$ finalization latency is possible in situations 
where no more than $p$ replicas fail and $n \ge 3f+2p+1$ for $p \ge 0$ (or even $n \ge 3f + 2p - 1$ for  $p \ge 1$).
Protocols achieving $2\delta$ finalization latency in these circumstances employ a special {\em fast path}.
This type of protocol was first explored in \cite{martin2006fast}.
Ideally, such protocols would still maintain a finalization latency of $3\delta$,
even if more than $p$ replicas fail,
by running a traditional $3\delta$ slow path alongside the fast path.
However, not all fast path protocols enjoy this property
(in particular, the protocol in \cite{martin2006fast} does not).
Moreover,  fast path protocols typically suffer from added complexity, such 
as complex view-change logic, ``progress certificate'' messages,
or speculative execution logic.
This added complexity has led to a history of errors
(indeed, as pointed out in \cite{abraham2017revisiting},
the protocol in \cite{martin2006fast} has a liveness bug).

Another crucial dimension of atomic broadcast
performance is its {\em throughput}, that is, the number of transactions that the protocol can process over time given the fixed bandwidth available at every replica. 
Unfortunatlely, as has been observed and reported in several works \cite{MirBFT,HoneyBadger,Narwhal},
leader-based protocols often suffer from a severe ``bandwidth bottleneck''
at the leader.
However, this leader bottleneck can be easily eliminated
while still maintaining the leader-based structure and all of its practical advantages \cite{song}.
This is done by using erasure codes to ensure that the leader can
disseminate large blocks with {\em low} and {\em well-balanced} communication complexity.

Another desirable feature of leader-based protocols
is lightweight view-change logic that suppports frequent leader rotation.
In older protocols, such as PBFT, a leader is generally kept 
in place for an extended period of time,
and a complex and somewhat inefficient 
view-change subprotocol is used to switch to a new leader if the current leader is
suspected of being faulty by other replicas.
A newer breed of protocols, typified by HotStuff \cite{HotStuff},
employ extremely lightweight view-change logic that supports frequent leader rotation.
Frequently rotating leaders can be beneficial for multiple reasons, for instance, to increase fairness when block production comes with rewards (e.g., maximal extractable value), or to increase censorship resistance.

In this work, our aim is to unite and improve the state-of-the-art in these key dimensions, and to do so while keeping the protocol as simple as possible. 

\textbf{Our contribution.} We present \prot\footnote{Kudzu is an insidious, fast-growing vine, also known as Mile-a-Minute.}: a fast and high-throughput atomic broadcast protocol that is remarkably simple compared to its predecessors. \prot\ is the first BFT protocol that combines an optimistic $2 \delta$ latency fast path integrated into a $3 \delta$ latency slow path, with high-throughput data dispersal, as well as lightweight view change logic that supports frequent leader rotation.
We provide a detailed description of the protocol and rigorously prove its security.

\begin{enumerate}
    \item \textbf{Fast path.} In a network of $n = 3f + 2p +1$ replicas,  \prot\ achieves finalization latency of  $2\delta$ if at most $p$ replicas are corrupt.
    \item \textbf{Simultaneous slow path.} If more than $p$ replicas fail, \prot\ maintains the best possible finalization latency of $3\delta$ by running a slow path alongside the fast path. 
    
    \item \textbf{High throughput.} High throughput is achieved by \prot\ 
    using erasure codes to ensure that the leader can
disseminate large blocks with low and well-balanced communication complexity.
    
    \item \textbf{Lightweight view change.}  
    \prot\ employs an extremely simple and efficient view change logic that allows for frequent leader rotation.
\end{enumerate}
\prot\ is the first atomic broadcast protocol to satisfy all of these properties simultaneously.
It also enjoys other properties, such as optimistic responsiveness (the protocol
proceeds as fast as the network will allow, with no artificially introduced delays),
and a block time (the delay between successive honest leaders proposing
a block) of just $2\delta$.

\textbf{Technical Intuition.} 
Inspired by DispersedSimplex~\cite{song}, \prot\ introduces the minimal changes to that protocol required to incorporate a fast path. 
A designated leader in \prot\ distributes the block data by sending erasure coded fragments to all other replicas. In turn, these replicas then broadcast the fragments themselves, together with a first-round vote on the cryptographic hash identifying the block. The voting logic carefully incorporates a rule that makes sure that a block can already be finalized given $n-p$ first-round votes. As a result, replicas can already reassemble the block and finalize it in $2\delta$ latency in the good case where the network is well behaved,
the leader is honest, and at most $p$ replicas are corrupt.
Another vote-counting rule is used to finalize the block on a slower, $3\delta$-latency path,
if more than $p$ replicas are corrupt.
To avoid getting stuck altogether, a replica may also vote for a couple of other blocks than the one it initially 
received from the leader, including a special ``timeout'' block.
The main innovation of our paper is the
logic for determining when these ``extra'' votes are cast ---
this logic is deceptively  simple, but carefully
 maintains a very delicate balance between liveness and safety.
 With this innovation, 
 we avoid intricacies such as ``progress certificates'', complex view changes, and speculative execution 
 as found in other protocols (such as Banyan~\cite{banyan}, SBFT~\cite{gueta2019sbft}, Kuznetsov et al.~\cite{kuznetsov2021revisiting}, and HotStuff-1~\cite{kang2024hotstuff}).

\section{Related Work}
\label{sec:relatedwork}

\subsection{Fast Path Protocols}
\label{sec:relatedwork:fast}
A long line of work proposes consensus protocols with a fast path, typically called ``fast'', ``early-stopping'' or ``one-step'' consensus~\cite{brasileiro2001consensus,kursawe2002optimistic,dutta2004complexity,song2008bosco,martin2006fast,guerraoui2007refined}. 
FaB Paxos~\cite{martin2006fast} introduces a parametrized model with $3f + 2p + 1$ replicas, where $p \ge 0$. The parameter $p$ describes the number of replicas that are not needed for the fast path.
These protocols can terminate optimally fast in theory ($2\delta$, or 2 network delays) under optimistic conditions.
The papers \cite{dutta2004complexity,kuznetsov2021revisiting,abraham2021good} point out that the lower bound of $3f + 2p + 1$ actually only applies to a restricted type of protocol. The papers present single-shot consensus protocols that use only $3f + 2p^* -1$ replicas, with $p^* \geq 1$, and prove the corresponding lower bound.

\subsection{Simultaneous Slow Path Protocols}
\label{sec:relatedwork:integratedslow}
Interestingly, in practice, these fast path protocols might {\em increase} the finalization latency, as the fast path requires a round of voting between $n-p$ replicas, which could be slower than two rounds of voting between $n-f-p$ replicas that are concentrated in a geographic area. Banyan~\cite{banyan} performs the fast path in parallel with the $3\delta$ mechanism, which is optimally fast if more than $p$ replicas are faulty or exhibit higher network latency. However, Banyan can exhibit \textit{unbounded} message complexity when there is a corrupt leader.
\prot\ addresses this drawback, shares the same optimistic latency properties, improves throughput through balanced dispersal, benefits from a simpler design, and better properties during leader rotation (see \cref{sec:relatedwork:viewchange}).

\subsection{High-Throughput Protocols}
\label{sec:relatedwork:bottlenecks}
Leader-based protocols such as PBFT~\cite{PBFT-short}, HotStuff~\cite{HotStuff} and Tendermint~\cite{Tendermint} suffer from a bandwidth bottleneck, as in these protocols the leader is responsible for disseminating transactions to all replicas.
As mentioned above,
this bottleneck has been well known for quite a while \cite{MirBFT,HoneyBadger,Narwhal}.
One way to alleviate this bottleneck is to move away from leader-based protocols
to a more symmetric, leaderless protocol design where all replicas disseminate transactions.
Such an approach was already taken in \cite{MirBFT,HoneyBadger,DagRider,Narwhal}.
These protocols typically have worse latency than leader-based protocols,
and moreover, since many replicas may end up broadcasting the same transactions,
the supposed improvement in throughput can end up being illusory 
(note that \cite{MirBFT} actually tackles this duplication problem head on).

\prot\ follows a different approach to address the leader bottleneck, which makes use of erasure coding~\cite{solana,turbine}. By splitting blocks into smaller, erasure-coded shares, the leader can transmit less data, leading to a balanced utilization of resources.

This line of work shines by providing high throughput and low latency~\cite{song,pipes}. Alpenglow~\cite{alpenglow} is a recent proof-of-stake protocol being deployed for the Solana blockchain that showcases the appeal of this line of work in practice. Alpenglow and \prot\ are intimately related in their voting logic.

\subsection{View Changes}
\label{sec:relatedwork:viewchange}

Some atomic broadcast protocols are notorious for having complex view changes, especially in the presence of a fast path~\cite{aublin2015next}. In the case of Zzyzzva~\cite{kotla2007zyzzyva} and UpRight~\cite{clement2009upright}, safety errors were later pointed out~\cite{abraham2017revisiting}. Arguably SBFT~\cite{gueta2019sbft} was the first to correct these mistakes. SBFT~\cite{gueta2019sbft} avoids the all-to-all broadcast
that occurs in \prot,
but this results in higher finalization latency. Moreover, SBFT does not address the above-mentioned leader bottleneck.

\section{Model and Preliminaries}

We consider a network of $n$ replicas $P_1, \dots, P_n$ called replicas. Up to $f$ replicas can be Byzantine, i.e., deviate from the protocol in arbitrary ways, such as collude to attack the protocol. The remaining replicas follow the protocol and are referred to as honest. We aim to provide better latency if only up to $p$ replicas do not cooperate. In other words, $n-p$ honest replicas including the leader are enough for the fast path to be effective.

To prove the security of \prot\, we assume
\begin{equation}
\label{eq:n_lower}
    n \ge 3f + 2p +1,
\end{equation}
where $f \ge 1$ and $p \ge 0$.
Moreover, to prove concrete bounds on message, communication, and storage complexity,
we assume
\begin{equation}
\label{eq:n_upper}
    n < 3 (f + p + 1).
\end{equation}
Assumption (\ref{eq:n_upper}) is not a real restriction.
Indeed, if $n \ge 3(f+p+1) = 3f + 3p + 3$, then we can always
{\em increase} $p$ appropriately, leaving $n$ and $f$ alone, so that both (\ref{eq:n_lower})
and (\ref{eq:n_upper}) are satisfied.
This only increases the overall performance of the protocol.

\subsection{Network Assumptions}

We will not generally assume network synchrony.
However,
we say the network is {\em $\delta$-synchronous at time $T$}
if every message sent from an honest replica $P$ at or before time $T$
to an honest replica $Q$ 
is received by $Q$ before time $T+\delta$.
We also say the the network is \defn{$\delta$-synchronous over an interval
$[a,b+\delta]$} if it is $\delta$-synchronous at time $T$
for all $T \in [a,b]$.

While our protocols always guarantee {\em safety},
even in periods of asynchrony,
{\em liveness} will only be guaranteed in periods
of $\delta$-synchrony, for appropriately bounded $\delta$ (and this synchrony bound may be explicitly used by the protocol).
Thus, we are essentially working in the {\em partial synchrony} model of \cite{PartialSynchrony}.
However, instead of assuming a single point in time (GST)
after which the network is assumed to be synchronous,
we take a somewhat more general point of view that models
a network that may alternate between periods of asychrony and synchrony.

Replicas have local clocks that can measure the passage of local time.
We do not assume that clocks are synchronized in any way.
However, we do assume that there is no {\em clock skew}, that is, all clocks tick at the same rate (but we could also 
just assume the skew is bounded and incorporate that 
bound into the protocol's synchrony bound).

\subsection{Problem Statement}

The purpose of state machine replication is to totally order blocks containing transactions, so that all replicas output transactions in the same order. Our protocol orders blocks by associating them with natural numbered slots. Some leader replica is assigned to every slot. For every slot, either some block produced by the leader might be finalized, or the protocol can yield an empty block. The guarantees of our protocol can be stated as follows:

\begin{itemize}
\item \textbf{Safety}: If some honest replica finalizes block $B$ in slot $v$, and another honest replica finalizes block $B'$ in slot $v$, then $B = B'$.
\item \textbf{Liveness}: If the network is in a period of synchrony, each honest replica continues to finalize blocks for slots $v = 1,2,\dots$ 
\end{itemize}

\noindent In addition to liveness and safety, we support a fast path:

\begin{itemize}
    \item \textbf{Fast Termination}: If the network is in a period of synchrony, $n - p$ replicas behave momentarily honestly, and an honest leader proposes a block $B$ at time $t$, then every honest replica finalizes $B$ at time $t + 2\delta$.
\end{itemize}

\subsection{Cryptographic Assumptions}

\subsubsection{Signatures and certificates.}
\label{sec:sigs}

We make standard cryptographic assumptions of secure digital signatures and collision-resistant hash functions. 
We assume all replicas know the public keys of other replicas. 

We use a $k$-out-of-$n$ threshold signature scheme.
We refer to a {\em signature share} and a {\em signature certificate}:
signature shares from $k$ replicas on a given message
may be combined to form a signature certificate on that message.
This can be implemented in various ways, e.g., 
based on BLS signatures~\cite{BLS-sigs,cryptoeprint:2018/483,DBLP:conf/pkc/Boldyreva03}.
The security property for such a threshold signature scheme
may be stated as follows.
\begin{description}
\item[Quorum Size Property:]
It is infeasible to produce a signature certificate 
on a message $m$,
unless $k-f'$ honest replicas have 
issued signature shares on $m$, where $f' \le f$ 
is the number of corrupt replicas.
\end{description}
For ease of exposition and analysis, we assume {\em static corruptions}, so the
adversary must choose some number $f' \le f$ replicas to corrupt at the
very beginning of the protocol execution, and then does not corrupt any
replicas thereafter.
That said, we believe all of our the protocols are secure against
{\em adaptive corrupts}, provided the threshold signature scheme is as well.

As we will see, we will need a one $k$-out-of-$n$ threshold signature scheme
with $k=n-f-p$, and another with $k=n-p$.

\subsubsection{Information Dispersal.}
\label{sec:avid}

We make use of well-known techniques for {\em asynchronous verifiable information dispersal (AVID)} involving erasure codes and Merkle trees~\cite{avid}.

\paragraph{Erasure codes.}
For integer parameters $k \ge d \ge 1$, a {\em $(k, d)$-erasure code} encodes a bit string $M$ as a vector
of $k$ {\em fragments}, $f_1, \ldots, f_k$, in such a way that 
any $d$ such fragments may be used to efficiently reconstruct $M$.
Note that for variable-length $M$, 
the reconstruction algorithm
also takes as input the length $\beta$ of $M$.
The reconstruction algorithm may fail 
(for example, a formatting error)---if it fails it returns $\bot$, while if it succeeds
it returns a message that when re-encoded
will yield $k$ fragments that agree with the original
subset of $d$ fragments. 
We assume that all fragments have the same size,
which is determined as a function of $k$, $d$, and $\beta$.

Using a Reed-Solomon code, which is based on polynomial
interpolation, we can realize
a  $(k, d)$-erasure code so that
if $\abs{M} = \beta$, then each fragment has size
$\approx \beta/d$.

In our protocol,
the payload of a block will be encoded using an 
$(n, f+p+1)$-erasure code.
Such an erasure code encodes a payload $M$ as a vector of fragments
$f_1, \ldots, f_n$, any $f+p+1$ of  which can be used to 
reconstruct $M$.
This leads to a data expansion rate
of at most roughly $3$;
that is, $\sum_i \abs{f_i} \approx n/(f+p+1) \cdot \abs{M} < 3 \abs{M}$,
where last inequality follows from assumption (\ref{eq:n_upper}).

\paragraph{Merkle trees.}
Recall that a Merkle tree allows one replica $P$ to
commit to a vector of values $(v_1, \ldots, v_k)$
using a collision-resistant hash function
by building a (full) 
binary tree whose leaves are the hashes of $v_1, \ldots, v_k$,
and where each internal node of the tree is the hash of its 
two children.
The root $r$ of the tree is the commitment.
Replica $P$ may ``open'' the commitment at a position $i \in \Int{k}$
by revealing $v_i$ along with a ``validation path'' $\pi_i$,
which consists of the siblings of all nodes along the 
path in the tree from the hash of $v_i$ to the root $r$. 
We call $\pi_i$ a {\em validation path from the root under $r$
to the value $v_i$ at position $i$}.
Such a validation path is checked by recomputing the
nodes along the corresponding path in the tree,
and verifying that the recomputed root is equal to the given commitment $r$. 
The collision resistance of the hash function ensures that
$P$ cannot open the commitment
to two different values at a  given position.

\paragraph{Encoding and decoding.}
For a given payload $M$ of length $\beta$, we will encode
$M$ as a vector of fragments $(f_1, \ldots, f_n)$
using an $(n, f+p+1)$-erasure code,
and then form a Merkle tree with root $r$ whose leaves are the hashes
of $f_1, \ldots, f_n$.
We define the {\em tag $\tau \deq (\beta,r)$}.

For a tag $\tau=(\beta,r)$,
we shall call $(f_i, \pi_i)$ a 
{\em certified fragment for $\tau$ at position $i$} if
\begin{itemize}
\item
$f_i$ has the correct length of a fragment for a message of length $\beta$, 
and
\item
$\pi_i$ is a correct 
validation path from the root under $r$
to the fragment $f_i$ at position $i$.
\end{itemize}

The function $\var{Encode}$ takes as input a payload $M$. 
It builds a Merkle tree for $M$ as above with root $r$
(encoding $M$ as a vector of fragments, 
and then building the
Merkle tree whose leaves are the hashes of all of these fragments).
It returns
\[
\big(\ \tau, \ \{ (f_i,\pi_i) \}_{i \in \Int{n}} \ \big),
\]
where $\tau$ is the tag $(\beta,r)$, $\beta$ is the length of $M$, and
each 
$(f_i, \pi_i)$ is a 
certified fragment for $\tau$ at position $i$.

The function $\var{Decode}$ takes as input
\[
\big(\ \tau, \ \{ (f_i,\pi_i) \}_{i \in \mathcal{I}} \ \big),
\]
where $\tau=(\beta,r)$ is a tag,
$\mathcal{I}$ is a subset of $\Int{n}$ of size $f+p+1$,
and each 
$(f_i, \pi_i)$ is a 
certified fragment for $\tau$ at position $i$.
It first reconstructs a message $M'$ 
from the fragments $\{f_i\}_{i \in \mathcal{I} }$,
using the size parameter $\beta$.
If $M' = \bot$, it returns $\bot$.
Otherwise, it encodes $M'$ as a vector of fragments $(f'_1,\ldots,f'_n)$
and Merkle tree with root $r'$ from $(f'_1,\ldots,f'_n)$.
If $r' \ne r$, it returns $\bot$.
Otherwise, it returns $M'$.

Under collision resistance
for the hash function used for the Merkle trees, 
any $f+p+1$ certified fragments for given tag $\tau$ 
will decode to the same payload ---
moreover, if $\tau$ is the output of the encoding function,
these fragments will decode to $M$
(and therefore, if the decoding function outputs $\bot$, 
we can be sure that $\tau$ was maliciously constructed).
This observation is the basis for the 
protocols in  
\cite{SodsBC,DBLP:conf/podc/LuL0W20,yang2022dispersedledger}.
Moreover, with this approach, we do not need to use
anything like an ``erasure code proof system'' 
(as in \cite{cryptoeprint:2021/1500}),
which would add significant computational complexity
(and in particular, the erasure coding would have to be done
using parameters compatible with the proof system,
which would likely lead to much less efficient 
encoding and decoding algorithms).

\section{\prot\ Protocol}\label{simple_bound}

\prot\ iterates through slots, where in each slot there is a designated leader who proposes a new block, which is chained to a parent block. Leaders may be rotated in each slot, either in a round-robin fashion or using some pseudo-random sequence. The slot leader disseminates large blocks in a way that keeps the overall communication complexity low and avoids a bandwidth bottleneck at the leader. The communication is balanced, meaning that each replica, including the leader, transmits roughly the same amount of data over the network.

We describe our protocol as a few simple subprotocols that run concurrently with each other:
\begin{itemize}
    \item Vote and Certificate Pool: data structure managing the votes and certificates;
    \item Complete Block Tree: data structure storing the reconstructed blocks;
    \item Main Loop: loop issuing votes that makes sure some blocks become \emph{notarized} and \emph{finalized}. Notarized blocks can be reconstructed by all replicas and are added to the Complete Block Tree. Finalized blocks are ordered and output by the protocol.
\end{itemize}

\subsection{Protocol Data Objects}

\begin{definition}[block]
A \defn{block $B$} is of the form $\lit{Block}(v, \tau, h_p)$, where
\begin{enumerate}
\item
$v \in \{1, 2, \ldots\}$ is the slot number associated with the block
(and we say {\em $B$ is a block for slot $v$}),
\item
$\tau$ is the tag obtained by encoding $B$'s payload $M$,
\item
$h_p$ is the hash of $B$'s parent block 
(or $h_p=\bot$ by convention if $B$'s parent is a notional ``genesis'' block).
\end{enumerate}
We also call a certified fragment for the tag $\tau$ a {\em certified fragment for $B$}.
The block $B_v^{\lit{timeout}} = \lit{Block}(v, \bot, \bot)$ is a special \emph{timeout block}. %
\end{definition}

\begin{definition}[votes and certificates]
A \defn{notarization vote from $P_i$ for block $B$} is an
object of the form $\lit{NotarVote}(B, \sigma_i, f_i, \pi_i)$,
where $\sigma_i$ is a valid signature share from $P_i$ on the object $\lit{Notar}(B)$,
and $(f_i,\pi_i)$ is either a certified fragment for $B$ at position $i$, or $(f_i,\pi_i) = (\bot, \bot)$ if $B = B_v^{\lit{timeout}}$.

A \defn{notarization certificate for $B$} is an
object of the form $\lit{NotarCert}(B, \sigma)$,
where $\sigma$ is a valid $(n-f-p)$-out-of-$n$ signature certificate on the object $\lit{Notar}(B)$. The notarization vote on the timeout block is also called the timeout vote, and the notarization certificate for the timeout block is called the timeout certificate.

A \defn{first vote from $P_i$ on block $B$} 
is an
object of the form $\lit{FirstVote}(\sigma_i' ,\lit{NotarVote}(B, \sigma_i, f_i, \pi_i))$,
where $\sigma_i'$ is a valid signature share from $P_i$ on the object $\lit{First}(B)$,
and $\lit{NotarVote}(B, \sigma_i, f_i, \pi_i)$ is a notarization vote from $P_i$ on block $B$. %

A \defn{fast finalization certificate for $B \neq B_v^{\lit{timeout}}$} is an
object of the form $\lit{FirstCert}(B, \sigma)$,
where $\sigma$ is a valid $(n-p)$-out-of-$n$ signature certificate on the object $\lit{First}(B)$.

A \defn{finalization vote from $P_i$ on block $B$} is an
object of the form $\lit{FinalVote}(B, \sigma_i)$,
where $\sigma_i$ is a valid signature share from $P_i$ on the object $\lit{Final}(B)$.

A \defn{finalization certificate for $B$} is an
object of the form $\lit{FinalCert}(B, \sigma)$,
where $\sigma$ is a valid $(n-f-p)$-out-of-$n$ signature certificate on the object $\lit{Final}(B)$.
\end{definition}

\subsection{Vote and Certificate Pool}\label{sec:pool}

Each replica maintains a {\em pool} with votes and certificates. For every slot, the pool stores votes and certificates associated with the slot.

As we will see, by design, for any one slot, a honest replica can send only 1 first vote, 1 timeout vote,
and 1 finalization vote.
As we will also see later (in \Cref{sec:bounds}), 
for one slot a honest replica can only send at most 3 (non-timeout) notarization votes.
Any votes exceeding this bound can only result from misbehavior and are not added to the pool. For example, if a replica receives more than three notarization votes for a given slot from some replica $P$, the replica can ignore these votes and conclude that $P$ is corrupt. In particular, only one first vote per replica can be observed by the protocol loop in \Pref{alg:main_loop_v3}. When a first vote is added to the pool, also the contained notarization vote is added to the pool.

Whenever a replica receives enough votes, and it does not already have a corresponding certificate,
it will generate the certificate, add it to the pool,
and broadcast the certificate to all replicas.
Similarly, whenever a replica receives a
certificate, and it does not already have a corresponding certificate, 
it will add it to the pool,
and broadcast the certificate to all replicas.
For one slot it is impossible that the pool would receive or create more than: 
1 timeout certificate, 1 fast finalization certificate, 1 finalization certificate,
and 5 notarization certificates.
The first bound is immediate, since there can be only one timeout certificate per slot.
The second and third bounds follow from the safety analysis below.
The fourth bound follows from the analysis in \Cref{sec:bounds}.

\subsection{Complete Block Tree}
\label{sec-block-pool}
Each replica also maintains a {\em complete block tree},
which is a tree of blocks
rooted at a notional genesis block at slot 0.
We will show that the number of blocks for a given slot is bounded by 5.
A block $B=\lit{Block}(v,\tau, h_p)$  is added to the tree if 
each of the following holds:
\begin{itemize}
\item
the certificate pool contains a notarization certificate for $B$ and $B \neq B_v^{\lit{timeout}}$;
\item
$h_p=\bot$ or 
the complete block tree 
contains a parent block with the hash $h_p$;
\item
the replica has received enough (i.e. $f+p+1$) notarization votes to reconstruct the effective payload $M$ of $B$
as 
\[
M \gets \var{Decode}(\tau, \{ (f_i,\pi_i) \}_{i\in\mathcal{I}}),
\]
where $\{ (f_i,\pi_i) \}_{i\in\mathcal{I}}$ is the corresponding 
collection of certified fragments for $\tau$;
\item
$M \ne \bot$ and satisfies some correctness predicate that may depend
on the path of blocks (and their payloads) from genesis to block $B$.
\end{itemize}
A replica does not broadcast anything in addition to adding a block to the block tree.

\subsection{Finalization}
We say that a block $B$ for slot $v$ is {\em explicitly finalized by replica $P$}
if the complete block tree of $P$ contains $B$
and the certificate pool of $P$ contains either a fast finalization certificate for $B$ or finalization certificate for  $B$.
In this case,
we say that 
all of the predecessors of block $B$
in the complete block tree are {\em implicitly finalized by $P$}.
The payloads of finalized
blocks may be then transmitted in order to the execution layer
of the protocol stack of a replicated state machine.

\subsection{Generating Block Proposals}\label{sec:generating_block}
The logic for generating block proposal material $B$, $(f_1, \pi_1), \ldots, (f_n, \pi_n)$ in slot $v$
in line \ref{line:generate_block} of \Pref{alg:main_loop_v3} is as follows:
\begin{itemize}
\item
build a payload $M$ that validly extends the path in the 
complete block tree ending at a block $B\RM{p}$ with hash $h_p$;  
\item
compute 
\[
(\tau, \{ (f_i,\pi_i) \}_{i \in \Int{n}}) \gets \var{Encode}(M) ; 
\]
\item
set $B \deq \lit{Block}(v,\tau,h_p)$.
\end{itemize}

\subsection{Validating Block Proposals}\label{sec:validating_block}
To check if $\lit{BlockProp}(B, f_j,\pi_j)$ is a valid block proposal in slot $v$ in line~\ref{line:validate_block} of \Pref{alg:main_loop_v3},
replica $P_j$ checks that the following conditions holds: 

\begin{itemize}
\item
the proposal is signed or otherwise authenticated by the leader,
\item
$B$ is of the form $\lit{Block}(v,\tau,h_p)$,
\item
$(f_j,\pi_j)$ is a certified fragment for $\tau$ at position $j$,
\item the complete block tree
contains a block with the hash $h_p$ in a slot $v' < v$;
\item the pool contains timeout certificates for 
slots $v'+1, \ldots, v-1$;
\end{itemize}

These last two conditions might not hold at a given point in time, but may hold at a later point in time, and so might need to be checked again when blocks or certificates are added.

\subsection{Main Loop}

The main protocol for $P_j$ is
described in \Pref{alg:main_loop_v3}.
In the description, $\proc{leader}(v)$ denotes the leader for slot $v$ ---
as mentioned,
leaders may be rotated in each slot,
either in a round-robin fashion or using some pseudo-random sequence.

As mentioned in \Cref{sec:pool}, each replica only considers only one first vote that it receives
from any other replica.
To make this explicit in the protocol, we use a map $\mathit{firstVote}$ from replicas to blocks
to record these votes for the current slot.
The protocol also uses simple helper functions on this map.   
\begin{itemize}
\item $\proc{allVotes}(\textit{firstVote})$: the total number of first votes for slot $v$ contained in the pool,
\item $\proc{maxVotes}(\textit{firstVote})$: the maximal number of first votes on some non-timeout block $B$ for slot $v$ contained in the pool,
\item $\proc{manyVotes}(\textit{firstVote})$: returns the set of non-timeout blocks in slot $v$ on which the pool contains at least $f+p+1$ first votes.
\end{itemize}
For example, if the pool contains 1, 2, 3, 4 first votes on blocks $B_1$, $B_2$, $B_3$, $B_v^{\lit{timeout}}$ respectively (and $f+p+1 = 2$), then $\proc{allVotes} = 10$, $\proc{maxVotes} = 3$, and $\proc{manyVotes} = \{B_2, B_3\}$.

The protocol also uses a subprotocol \proc{ReconstructAndNotarize}$(v,B)$, defined in 
\Pref{alg:reconstruct}.

Each replica $P_j$ moves through slots $v=1,2,\ldots .$
In each slot, it will enter a loop in which it waits for one of several conditions to trigger
an action.
These conditions are based on the objects in its pool and its complete block tree,
as well as local variables.
\begin{itemize}
\item 
Lines \ref{line:exit1_start}--\ref{line:exit1_end} present the logic
for the replica successfully exiting the slot by finding a block $B$ for that slot 
in its complete block tree.
In addition, if the replica did not broadcast a notarization vote for any other block
(including the timeout block) for that slot, it will also broadcast a finalization block for $B$.
\item 
Lines \ref{line:exit2_start}--\ref{line:exit2_end} present the logic
for the replica unsuccessfully exiting the slot by obtaining a timeout certificate for that slot.
\item 
Lines \ref{line:prop_start}--\ref{line:prop_end} present the logic for the replica proposing
a block for that slot if it is the leader for that slot.
It generates the block proposal as in \Cref{sec:generating_block}, extending the path in the
complete block tree ending at $B\RM{p}$.
Here, $B\RM{p}$ is either the genesis block or the block that it found
in its complete block tree the last time it successfully exited a slot
(other choices of $B\RM{p}$ are possible).
\item 
Lines \ref{line:fvblock_start}--\ref{line:fvblock_end} present the logic for the replica broadcasting
a first vote for a non-timeout block $B$.
It will do so only if $B$ is a valid block proposed by the leader 
(as in \Cref{sec:validating_block}) and has not already first voted.
Recall that a first vote for $B$ also includes a notarization vote for $B$.
\item 
Lines \ref{line:time_start}--\ref{line:time_end} present the logic for the replica broadcasting
a first vote for the timeout block for this slot.
It will do so only if a sufficient amount of time has passed since it entered the slot
and has not already first voted.
\item
Lines \ref{line:map_start}--\ref{line:map_end} present the logic for updating the map $\mathit{firstVotes}$.
No other actions are taken.
\item 
Lines \ref{line:second_start}--\ref{line:second_end} present the logic for the replica taking 
a ``second look'' at a block $B$, if it has received sufficiently many first votes for $B$.
It will do so only if it has already first voted (and has not already taken a second look at $B$).
If it can reconstruct a valid payload for $B$, it will broadcast a notarization vote for $B$ (if it has
not already done so);
otherwise, it will broadcast a notarization vote for the timeout block.
\item 
Lines \ref{line:special_start}--\ref{line:special_end} present the logic for the replica 
broadcasting a timeout vote under special circumstances.
It will do so only if it has already first voted and 
\[\proc{allVotes}(\textit{firstVote})-\proc{maxVotes}(\textit{firstVote}) \ge f+p+1.\]
We note that the quantity \[\proc{allVotes}(\textit{firstVote})-\proc{maxVotes}(\textit{firstVote})\] 
cannot decrease as we add entries to $\var{firstVote}$.
That is because, when we add an entry,  the first term increases by 1 and the second either decreases by 1 or remains unchanged. 
\end{itemize}

\begin{algorithm}
\caption{\prot\ main loop for replica $P_j$}\label{alg:main_loop_v3}
\begin{algorithmic}[1]
\State $B\RM{p} \gets \proc{genesis}$ \Comment{parent of the next block}
\For{$v = 1,2,\dots$}
    \State $T\RM{start}      \gets \proc{clock}()$ \Comment{slot‑local initialisation}
    \State $\textit{done}, \textit{proposed}, \textit{firstVoted}    \gets \false$
    \State $\textit{notarized} \gets \{\}$         \Comment{blocks already notarized}
    \State $\textit{secondLook} \gets \{\}$        \Comment{blocks already reconsidered}
    \State $\textit{firstVote}\gets \{\}$          \Comment{map $P_i\mapsto B$ for their first vote}
    \Statex 
    \Loop  \label{line:v3_wait_until}
        \State there exists a block $B$ for slot $v$ in the complete block tree $\Rightarrow$ \label{line:exit1_start}
        \State \>\>\>\> $B\RM{p} \gets B$; $\textit{done}\gets\true$ 
        \State  \>\>\>\> \textbf{if} $\textit{notarized} \subseteq \{B\}$ \textbf{then} broadcast $\lit{FinalVote}(B, \sigma_j)$ \label{line:condition_final_vote} \label{line:exit1_end}

        \Statex  

        \State the pool contains a timeout certificate for $v$ $\Rightarrow$ \label{line:exit2_start}
        \State \>\>\>\> $\textit{done}\gets\true$ \label{line:timeout_cert_done} \label{line:exit2_end}
        \Statex  
        
        \State $\lnot\textit{proposed}\;\land\;\proc{leader}(v)=P_j \Rightarrow$ \label{line:prop_start}
        \State \>\>\>\> $\textit{proposed}\gets\true$
        \State \>\>\>\> generate block proposal material $B$, $(f_1,\pi_1),\dots,(f_n,\pi_n)$ extending block $B\RM{p}$ \label{line:generate_block}
        \State \>\>\>\> \textbf{for} all $i \in [n]$: \textbf{send} $\lit{BlockProp}(B,f_i,\pi_i)$ to $P_i$ \label{line:prop_end} %
        \Statex  

        \State $\lnot\textit{firstVoted}\;\land\;$received valid $\lit{BlockProp}(B,f_j,\pi_j)$ from $\proc{leader}(v)\Rightarrow$\label{line:validate_block} \label{line:fvblock_start}
        \State \>\>\>\> $\textit{firstVoted}\gets\true$
        \State \>\>\>\> broadcast $\lit{FirstVote}(\sigma_j',\lit{NotarVote}(B,\sigma_j,f_j,\pi_j))$ \label{line:notarize_as_first_vote}
        \State \>\>\>\> $\textit{notarized}\gets\textit{notarized}\cup\{B\}$ \label{line:fvblock_end}
        \Statex  

        \State $\lnot\textit{firstVoted}\;\land\;\proc{clock}() > T\RM{start}+\Delta\RM{timeout}\Rightarrow$ \label{line:time_start}
        \State \>\>\>\> $\textit{firstVoted}\gets\true$ \label{line:first_vote_timeout_v3}
        \State \>\>\>\> broadcast $\lit{FirstVote}(\sigma_j',\lit{NotarVote}(B_v^{\lit{timeout}},\sigma_j,\bot,\bot))$
        \State \>\>\>\> $\textit{notarized}\gets\textit{notarized}\cup\{B_v^{\lit{timeout}}\}$ \label{line:time_end}
        \Statex  

        \State received valid $\lit{FirstVote}(\_,\lit{NotarVote}(B,\_))$ from $P_i$ \textbf{and}\; $\textit{firstVote}[P_i] = \bot \Rightarrow$ \label{line:map_start}
        \State \>\>\>\> $\textit{firstVote}[P_i]\gets B$ \label{line:record_first_vote_v3} \label{line:map_end}
        \Statex  

        \State $\textit{firstVoted}\;\land\;\exists B\in\proc{manyVotes}(\textit{firstVote})\setminus\textit{secondLook}$   \label{line:condition_many} \label{line:second_start} 
        \State \>\>\textbf{and}\; $B$'s parent is in the complete block tree $\Rightarrow$
        \State \>\>\>\> $\textit{secondLook}\gets\textit{secondLook}\cup\{B\}$
        \State  \>\>\>\> \proc{ReconstructAndNotarize}$(v,B)$ \label{line:invoke_reconstruct_and_notarize} \label{line:second_end}
        \Statex

        \State $\textit{firstVoted}\;\land\;
               (\proc{allVotes}(\textit{firstVote})-\proc{maxVotes}(\textit{firstVote})\ge f+p+1)$ \label{line:special_start}
        \State \>\>\textbf{and}\; $B_v^{\lit{timeout}}\notin\textit{notarized} \Rightarrow$  \label{line:notarize_timeout_v3}
        \State \>\>\>\> broadcast $\lit{NotarVote}(B_v^{\lit{timeout}},\sigma_j,\bot,\bot)$ \label{line:notarize_after_safe_to_skip}
        \State \>\>\>\> $\textit{notarized}\gets\textit{notarized}\cup\{B_v^{\lit{timeout}}\}$ \label{line:special_end}
    \EndLoop
\EndFor
\end{algorithmic}
\end{algorithm}

\begin{algorithm}
\caption{\proc{ReconstructAndNotarize}$(v,B)$}\label{alg:reconstruct}
\begin{algorithmic}[1]
\State reconstruct payload for $B$
\If{reconstruction succeeds $\land$ payload valid} \label{line:rvalid}
  \If{$B\notin\textit{notarized}$}
    \State broadcast $\lit{NotarVote}(B,\sigma_j,f_j,\pi_j)$
    \State $\textit{notarized}\gets\textit{notarized}\cup\{B\}$
  \EndIf
\Else
  \If{$B_v^{\lit{timeout}}\notin\textit{notarized}$}
    \State broadcast $\lit{NotarVote}(B_v^{\lit{timeout}},\sigma_j,\bot,\bot)$
    \State $\textit{notarized}\gets\textit{notarized}\cup\{B_v^{\lit{timeout}}\}$
  \EndIf
\EndIf
\end{algorithmic}
\end{algorithm}

\section{Protocol Analysis}

We start by proving some helpful properties.

\begin{lemma}[Validity Property]
Suppose that a block $B$ for some slot $v$ is added to the complete block tree of some replica. If the leader for slot $v$ is honest, $B$ must have been proposed by that leader.
\end{lemma}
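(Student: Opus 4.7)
The plan is to trace backwards from the existence of $B$ in the complete block tree to a $\lit{BlockProp}$ message sent by the leader. By definition of the complete block tree, $B \ne B_v^{\lit{timeout}}$ and the pool contains a notarization certificate $\lit{NotarCert}(B,\sigma)$. By the Quorum Size Property applied to the $(n-f-p)$-out-of-$n$ threshold signature scheme, at least $(n-f-p)-f = n-2f-p \ge f+p+1 \ge 1$ honest replicas must have issued a signature share on $\lit{Notar}(B)$, i.e.\ broadcast a notarization vote for $B$. I will then inspect the main loop to enumerate exhaustively the places where an honest replica can broadcast a $\lit{NotarVote}(B,\ldots)$ for a non-timeout block $B$, and argue that each such place ultimately forces the honest leader to have proposed $B$.

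Inspecting \Pref{alg:main_loop_v3} and \Pref{alg:reconstruct}, an honest replica $P_i$ broadcasts a notarization vote for a non-timeout block $B$ at exactly two program points: (a) line \ref{line:notarize_as_first_vote}, packaged inside a first vote, and (b) the ``reconstruction succeeded'' branch of \proc{ReconstructAndNotarize}$(v,B)$ invoked from lines \ref{line:second_start}--\ref{line:second_end}. I will handle each case.

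In case (a), the guard at line \ref{line:fvblock_start} requires $P_i$ to have received a valid $\lit{BlockProp}(B,f_i,\pi_i)$ from $\proc{leader}(v)$. Since messages are point-to-point authenticated and the leader is honest, this message must actually have been sent by the leader, and in the protocol the leader only transmits $\lit{BlockProp}$ messages for a block $B$ it constructs in line \ref{line:generate_block} — hence the leader proposed $B$.  In case (b), the guard at line \ref{line:condition_many} requires $B \in \proc{manyVotes}(\textit{firstVote})$, which means $P_i$'s pool records first votes on $B$ from at least $f+p+1$ distinct replicas; each such entry contains a signature share on $\lit{First}(B)$ from the replica it is indexed by, so these $f+p+1$ replicas are distinct signers of $\lit{First}(B)$. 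Since there are at most $f$ corrupt replicas, at least $p+1 \ge 1$ of these signers are honest, and (using that an honest replica casts at most one first vote, produced via line \ref{line:notarize_as_first_vote}) we reduce to case (a) for that honest signer, again concluding that the honest leader proposed $B$.

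I do not expect a genuine obstacle here: the result is essentially a bookkeeping argument. The only point that requires care is making sure the enumeration of notarization-vote sites is complete and that honest signers on $\lit{First}(B)$ (rather than merely on $\lit{Notar}(B)$) are the ones whose behavior we pin down in case (b), since it is the $\lit{First}(B)$ signature, not the $\lit{Notar}(B)$ one, that the protocol couples tightly to receiving a $\lit{BlockProp}$ from the leader.
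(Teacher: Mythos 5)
Your proof is correct and follows essentially the same route as the paper's: invoke the Quorum Size Property to extract an honest notarization voter for $B$, then case-split on whether that vote arose from a first vote (forcing a valid \lit{BlockProp} from the honest leader) or from \proc{ReconstructAndNotarize} (in which case the $f+p+1$ first votes for $B$ yield an honest first voter, reducing to the first case). Your version merely spells out a few quantitative details (e.g., $n-2f-p \ge f+p+1$, and $p+1$ honest signers among the $f+p+1$) that the paper leaves implicit.
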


\begin{proof}
By the Quorum Size Property (see \Cref{sec:sigs}) for notarization certificates, at least $n-2f-p$ honest
replicas must have broadcast notarization shares for $B$.
Since we are assuming $n \ge 3f+2p+1$, it follows that $n-2f-p > 0$, so some honest replica $P$
must have broadcast a notarization share for $B$.
This could happen either at line~\ref{line:notarize_as_first_vote}
or at line~\ref{line:invoke_reconstruct_and_notarize}.
In the first case, $B$ must be the block that $P$ received as a proposal from the leader.
In the second case, since $P$ received $f+p+1$ first votes for $B$,
one of these must be a first vote for $B$ from some honest replica $Q$,
and so $B$ must be the block that $Q$ received as a proposal from the leader.
\end{proof}

\begin{lemma}[Completeness Property for Certificates]
\label{lemma:cert_comp}
If a certificate $X$ appears in the vote and certificate pool (so $X$ is a notarization, finalization, or timeout certificate) then $X$ (or its equivalent) will eventually appear in the corresponding pool of every other replica. Moreover, if $X$ appears in a replica’s pool at a time $T$ at which the network is $\delta$-synchronous, it will appear in every replica’s pool before time $T + \delta$.
\end{lemma}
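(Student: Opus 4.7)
The plan is to trace the certificate back to the moment it first entered the pool of the honest replica in question, and then invoke the protocol's broadcast-on-receipt rule together with $\delta$-synchrony. Let $P$ be the honest replica whose pool contains $X$ at time $T$, and let $T_0 \le T$ be the earliest time at which a certificate equivalent to $X$ (i.e.\ a $k$-out-of-$n$ signature certificate on the same signed object, where $k$ depends on whether $X$ is a notarization, fast finalization, or finalization certificate) was added to $P$'s pool. By the rules described in \Cref{sec:pool}, at time $T_0$ replica $P$ must have either (i) just collected enough signature shares and generated the certificate itself, or (ii) just received an equivalent certificate from another replica whose pool already contained one. In both cases, $P$ broadcasts the certificate to all replicas at time $T_0$.

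Next I would apply the synchrony assumption. Because $T_0 \le T$ and the network is $\delta$-synchronous at time $T$, every message sent by $P$ at or before time $T$ --- in particular, $P$'s broadcast of the certificate at time $T_0$ --- is delivered to every honest replica $Q$ before time $T_0 + \delta \le T + \delta$. Upon receipt, $Q$ either already has an equivalent certificate in its pool, or, by the pool rule, adds the received certificate to its pool. Either way, $Q$'s pool contains a certificate equivalent to $X$ strictly before $T + \delta$, which is exactly the desired quantitative conclusion. The asymptotic ``eventually'' claim follows immediately once a period of $\delta$-synchrony of any length eventually occurs.

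There is no real obstacle here; the lemma is essentially a direct consequence of the broadcast-on-receipt rule for the pool and the definition of $\delta$-synchrony at a single point in time. The two small points that deserve explicit mention are: (a) the quantifier ``every other replica'' should be read as ``every other honest replica'', since corrupt replicas have no obligation to maintain a pool; and (b) the word ``equivalent'' is necessary because, for any given signed object, many distinct $k$-out-of-$n$ threshold signature certificates may exist (arising from different subsets of signature shares), but all of them serve the protocol's purpose interchangeably --- so $Q$'s pool need not contain the exact bit string $X$, only a certificate on the same object.
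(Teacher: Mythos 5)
Your proof is correct and follows the same reasoning as the paper, which simply observes that any certificate entering the pool (whether assembled from shares or received from another replica) is immediately broadcast, so $\delta$-synchrony at time $T$ delivers it everywhere before $T+\delta$. Your version is just a more detailed elaboration of the paper's one-line argument, and your two clarifying remarks about ``honest'' replicas and certificate equivalence are reasonable but not essential.
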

\begin{proof}
This is clear, since a certificate appearing in the vote and certificate pool is broadcast immediately.
\end{proof}

\begin{lemma}[Completeness Property for Blocks]
\label{lemma:block_comp}
If a block $B$ appears in the complete block tree, then $B$ will eventually appear in the corresponding tree of every other replica. Moreover, if $B$ appears in a replica’s tree at a time $T$ at which the network is $\delta$-synchronous, it will appear in every replica’s tree before time $T + \delta$.
\end{lemma}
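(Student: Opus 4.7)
The plan is to prove this by strong induction on the slot number $v$ of block $B$. The inductive hypothesis will be the statement of the lemma itself, applied to all blocks in slots $v' < v$. The key observation is that a replica $P$ adds $B$ to its complete block tree only when four conditions hold: (i) $P$'s pool contains a notarization certificate for $B$, (ii) $P$'s tree contains the parent block (or $h_p = \bot$), (iii) $P$ has collected $f+p+1$ certified fragments for $B$, and (iv) the decoded payload $M$ is non-$\bot$ and satisfies the correctness predicate. I will show that each of these conditions holds for every other honest replica $Q$ by time $T+\delta$.

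For condition (i), I will invoke \Cref{lemma:cert_comp} directly: since the notarization certificate is in $P$'s pool at some time $\le T$, it is in $Q$'s pool by $T+\delta$. For condition (ii), I will apply the inductive hypothesis: the parent block is in $P$'s tree at some time $\le T$ and lies in a lower slot, so by IH it is in $Q$'s tree by $T+\delta$. For condition (iii), I will argue that because a valid $(n-f-p)$-out-of-$n$ certificate on $\lit{Notar}(B)$ exists at time $\le T$, the Quorum Size Property implies that at least $n-2f-p \ge f+p+1$ honest replicas broadcast notarization votes (each containing a certified fragment for $B$) at times $\le T$; under $\delta$-synchrony at $T$, $Q$ receives all these fragments by $T+\delta$. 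For condition (iv), the collision resistance of the hash function underlying the Merkle tree (as noted in \Cref{sec:avid}) guarantees that any $f+p+1$ certified fragments for the same tag $\tau$ decode to the same payload, so $Q$'s reconstruction yields exactly the same $M$ that $P$ reconstructed; the correctness predicate depends only on the path to $B$ in the tree, which $Q$ has by condition (ii), so the predicate evaluation agrees. For the first (``eventual'') part of the lemma, the same argument applies using reliable eventual delivery between honest replicas in place of $\delta$-synchrony at $T$.

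The main obstacle will be the bookkeeping for the inductive step on the parent. A priori, the parent block may have been added to $P$'s tree at some time $T\RM{par}$ that is much earlier than $T$, and we are only given $\delta$-synchrony at $T$, not at $T\RM{par}$. The crucial point that makes the induction go through is that $\delta$-synchrony at $T$ is a statement about \emph{all} messages sent by honest replicas at or before time $T$, in particular those sent at time $T\RM{par} \le T$. Thus, even though we cannot guarantee propagation of the parent block by $T\RM{par}+\delta$, every message needed to rebuild the parent block at $Q$ was transmitted by some honest replica by time $T$, and hence arrives at $Q$ by $T+\delta$. Once this is packaged into the induction hypothesis (``if $B'$ is in $P$'s tree at some time $\le T_0$ and the network is $\delta$-synchronous at $T_0$, then $B'$ is in $Q$'s tree by $T_0+\delta$''), the recursion closes cleanly and the proof is essentially a bookkeeping exercise.
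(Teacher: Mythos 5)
Your proof is correct and rests on the same key observation as the paper's (one-sentence) proof: the Quorum Size Property guarantees that at least $n-2f-p \ge f+p+1$ honest replicas have already broadcast notarization votes carrying certified fragments, so every replica can reconstruct the payload within one message delay of receiving the (rebroadcast) notarization certificate. Your additional bookkeeping --- the strong induction on slots for the parent condition, and the observation that $\delta$-synchrony at $T$ covers messages sent at any earlier time, so the ancestors' fragments and certificates also arrive by $T+\delta$ --- is a correct and worthwhile elaboration of details the paper leaves implicit.
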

\begin{proof}
We are relying on the Quorum Size Property (see \Cref{sec:sigs}) for notarization certificates: when a notarization certificate for a block $B$ is added to the certificate pool, at least $n - 2f - p$ honest replicas must have already broadcast notarization votes for $B$, which contain $B$ as well as fragments sufficient to reconstruct $B$’s payload, since $n -2f -p \ge f+p+1$.
\end{proof}

\subsection{Safety}

\begin{lemma}[Fast Finalization Implication]
\label{lemma:fast_final_implication}
Suppose a block $B$ is fast finalized by some honest replica, then the number of honest replicas that first vote for anything other than $B$ is at most $p$.
\end{lemma}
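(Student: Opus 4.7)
The plan is to invoke the Quorum Size Property of the $(n-p)$-out-of-$n$ threshold signature scheme directly. A fast finalization certificate for $B$ is, by definition, a valid $(n-p)$-out-of-$n$ signature certificate on the message $\lit{First}(B)$. Let $f' \le f$ denote the actual number of corrupt replicas. The Quorum Size Property then guarantees that at least $(n-p) - f'$ honest replicas issued a signature share on $\lit{First}(B)$; each such share is precisely the $\sigma_i'$ component of a $\lit{FirstVote}$ object, so each contributing honest replica first voted for $B$.

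Next I would use the fact, established in \Cref{sec:pool} and enforced by the protocol loop through the $\textit{firstVoted}$ flag, that every honest replica issues at most one first vote per slot. Hence the first-voting behavior of each honest replica is for a single block, and the honest replicas partition into those who first voted for $B$, those who first voted for something else (including $B_v^{\lit{timeout}}$), and those who did not first vote at all.

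Finally I would carry out the counting: the total number of honest replicas is $n - f'$, and from the first step at least $n - p - f'$ of them first voted for $B$. Subtracting, the number of honest replicas that first voted for anything other than $B$ (or did not first vote) is at most
\[
(n - f') - (n - p - f') = p,
\]
which proves the claim.

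There is no real obstacle here: the lemma is essentially a direct unpacking of the definitions of the fast finalization certificate and the Quorum Size Property, combined with the single-first-vote invariant. The only thing to be careful about is tracking $f'$ versus $f$ consistently, and noting that the bound $p$ covers honest replicas who first vote for \emph{any} other block, including the timeout block, as well as honest replicas who happen not to first vote at all.
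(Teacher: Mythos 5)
Your proposal is correct and follows essentially the same route as the paper's proof: apply the Quorum Size Property to the $(n-p)$-out-of-$n$ certificate to conclude that at least $n-p-f'$ honest replicas first voted for $B$, then subtract from the $n-f'$ honest replicas to obtain the bound $p$. The only difference is that you make the single-first-vote-per-slot invariant explicit, which the paper leaves implicit; this is a reasonable bit of added care but not a different argument.
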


\begin{proof}
Let $f'\le f$ be the actual number of corrupt replicas.
If some honest replica fast finalizes $B$, then --- 
by the Quorum Size Property (see \Cref{sec:sigs}) for fast finalization certificates ---
at least $n-p-f'$ honest replicas first voted for $B$.
So the number of honest replicas that first vote for anything other than $B$ is at most $(n-f')-(n-p-f') = p$.
\end{proof}

\begin{lemma}[Uniqueness of Fast Finalization Property]
\label{lemma:safe_to_notarize}
If an honest replica receives $f + p + 1$ first votes for a block $B$ in slot $v$, then no block different from $B$ can be fast finalized in slot $v$ by any honest replica.
\end{lemma}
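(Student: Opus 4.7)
The plan is a straightforward counting argument based on the Quorum Size Property combined with the fact that every honest replica broadcasts at most one first vote per slot. Let $f' \le f$ denote the actual number of corrupt replicas.

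First, I would argue that among the $f+p+1$ first votes that the honest replica received for $B$, at most $f'$ can come from corrupt replicas, so at least
\[
f + p + 1 - f' \ge p + 1
\]
of them come from honest replicas (using $f' \le f$). Since each such first vote is signed individually and an honest replica casts at most one first vote per slot (this is enforced by the \textit{firstVoted} flag in \Pref{alg:main_loop_v3}, which is set to \true\ in both branches that broadcast a first vote and guards the branches as a precondition), these $p+1$ first votes come from $p+1$ \emph{distinct} honest replicas, all of whom voted for $B$ and therefore for nothing else in slot $v$.

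Next, suppose for contradiction that some block $B' \ne B$ is fast finalized in slot $v$ by an honest replica. By the Quorum Size Property applied to the fast finalization certificate, at least $n - p - f'$ honest replicas must have issued a first vote on $B'$. But the set of honest replicas available to vote for anything other than $B$ has size at most
\[
(n - f') - (p + 1) = n - f' - p - 1,
\]
and the strict inequality $n - f' - p - 1 < n - p - f'$ yields the contradiction.

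There is no real obstacle here beyond keeping the two ``$f'$-accounting'' steps clearly separated: one bounding the corrupt share of the $f+p+1$ received votes from below by $p+1$ honest voters for $B$, and the other applying Quorum Size to the hypothetical fast finalization of $B'$. The single protocol-specific fact I rely on is that an honest replica broadcasts at most one first vote per slot, which is immediate from inspection of the main loop.
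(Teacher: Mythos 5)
Your proof is correct and rests on the same quorum-counting argument as the paper's: the Quorum Size Property for the $(n-p)$-out-of-$n$ fast finalization certificate plus the fact that each honest replica casts at most one first vote per slot. The only cosmetic difference is that you inline the counting that the paper factors out into Lemma~\ref{lemma:fast_final_implication} (and run the contradiction on the votes available for $B'$ rather than on the votes received for $B$), which changes nothing substantive.
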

\begin{proof}
Suppose towards contradiction that and some replica $P$ receives  $f+p+1$ first votes for a 
block $B$ but 
 a block $C\ne B$ is fast finalized by some honest replica $Q$.
By the previous lemma, at most $p$ honest replicas could first vote for anything other than $C$.
Therefore, $P$ can receive at most $f+p$ first votes for $B$, a contradiction.
\end{proof}

\begin{lemma}[Absence of Fast Finalization Property]
\label{lemma:safe_to_skip}
If the inequality $\proc{allVotes}(\textit{firstVote}) - \proc{maxVotes}(\textit{firstVote}) \geq f + p + 1$ holds for a replica in slot $v$, then no block can be fast finalized in slot $v$ by any replica. 
\end{lemma}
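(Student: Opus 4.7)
The plan is to reach a contradiction with Lemma~\ref{lemma:fast_final_implication} (Fast Finalization Implication). Assume the inequality $\proc{allVotes}(\textit{firstVote})-\proc{maxVotes}(\textit{firstVote}) \ge f+p+1$ holds at some replica $P$ in round $v$, and suppose, towards a contradiction, that some block $B$ is nevertheless fast finalized in round $v$ by some honest replica. Since a fast finalization certificate can only exist for a non-timeout block, $B$ is necessarily a non-timeout block.

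First I would unpack the local inequality at $P$. Because $\textit{firstVote}$ is a map from replicas to blocks with one entry per sender, $\proc{allVotes}(\textit{firstVote})$ counts \emph{distinct} replicas whose first votes $P$ has recorded. By the definition of $\proc{maxVotes}$, no non-timeout block is the target of more than $\proc{maxVotes}(\textit{firstVote})$ of these entries; in particular, the number of entries in $P$'s map that target something other than $B$ is at least $\proc{allVotes}(\textit{firstVote}) - \proc{maxVotes}(\textit{firstVote}) \ge f+p+1$. Hence at least $f+p+1$ distinct replicas have cast, as witnessed by $P$, a first vote in slot $v$ on some block other than $B$.

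Among those $\ge f+p+1$ distinct replicas, at most $f$ can be corrupt, so at least $p+1$ are honest. The key observation is that an honest replica broadcasts at most one first vote per slot, enforced by the $\textit{firstVoted}$ flag in \Pref{alg:main_loop_v3}; moreover, first votes are authenticated by signature shares that cannot be forged. Consequently, the entry $\textit{firstVote}[P_i]$ at $P$ for an honest $P_i$ faithfully records the unique first vote that $P_i$ ever issues. It follows that at least $p+1$ honest replicas first voted for something other than $B$, directly contradicting Lemma~\ref{lemma:fast_final_implication}, which states that if $B$ is fast finalized then at most $p$ honest replicas first vote for anything other than $B$.

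The main subtlety, rather than a genuine obstacle, is the step that lifts $P$'s local view into a global claim about honest voting behavior. Once one observes that honest replicas send exactly one first vote per slot and that signature shares on $\lit{First}(B)$ are unforgeable, the distinct honest senders counted in $P$'s map give a bona fide lower bound on the global count of honest replicas that first voted for something other than $B$, and the contradiction with Lemma~\ref{lemma:fast_final_implication} is immediate.
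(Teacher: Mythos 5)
Your proposal is correct and follows essentially the same route as the paper: both reduce to Lemma~\ref{lemma:fast_final_implication} via a counting argument on first votes, the paper bounding $\proc{allVotes}(\textit{firstVote})-\proc{maxVotes}(\textit{firstVote})$ above by $f+p$ and you equivalently bounding below the number of honest replicas voting for something other than $B$ by $p+1$. Your explicit justification that $\proc{maxVotes}$ dominates the count for $B$ (since $B$ is non-timeout) and that honest first votes are unique and unforgeable is a slightly more careful write-up of the same argument.
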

\begin{proof}
Suppose towards contradiction that a replica $P$ fast finalized block $B$ while for replica $Q$ the inequality holds. 
By \Cref{lemma:fast_final_implication}, at most $p$ honest replicas fast vote
for anything other than $B$. 
Therefore, $Q$ can receive at most $f+p$ fast votes for anything other than $B$.
Let $\textit{count}_B$ denote the number of first votes for $B$ received by $Q$.
It follows that, at any point in time, for replica $Q$, we have
\[
\proc{allVotes}(\textit{firstVote}) - \proc{maxVotes}(\textit{firstVote}) 
\le \proc{allVotes}(\textit{firstVote}) - \textit{count}_B \le f+p,
\]
a contradiction.
\end{proof}

\begin{lemma}[Incompatibility of Notarization and (Fast) Finalization Property]
\label{lemma:incompatibility_notarization_finalization}
Suppose that a valid block $B$ for some slot $v$ is (fast) finalized by some replica. If any replica obtains a notarization certificate for a block $C$ in slot $v$, then $C = B$. 
(In particular, no other block for slot $v$ can be added to the complete block tree of any replica and no timeout certificate can be obtained for slot $v$.)
\end{lemma}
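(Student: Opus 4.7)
The plan is to assume toward contradiction that some $C \neq B$ for slot $v$ obtains a notarization certificate, and split into two cases depending on whether $B$ is fast finalized or slow finalized. Let $f' \le f$ be the actual number of corrupt replicas. By the Quorum Size Property, the hypothetical notarization certificate for $C$ forces at least $n-f-p-f'$ honest replicas to have broadcast notarization votes for $C$.

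\textbf{Fast path case.} I would enumerate the four places where an honest replica can emit a notarization vote: the first-vote lines \ref{line:notarize_as_first_vote} and \ref{line:time_end}, the second-look call at line \ref{line:invoke_reconstruct_and_notarize}, and the special timeout trigger at line \ref{line:notarize_after_safe_to_skip}. Lemma \ref{lemma:fast_final_implication} bounds by $p$ the number of honest replicas who first-vote for anything other than $B$, which caps the combined contribution of lines \ref{line:notarize_as_first_vote} and \ref{line:time_end} toward votes on $C$. Lemma \ref{lemma:safe_to_notarize} forbids any honest second-look on a block $C' \ne B$ (it would require $f+p+1$ first votes for $C'$, contradicting fast finalization of $B$), and Lemma \ref{lemma:safe_to_skip} forbids the special timeout trigger at line \ref{line:notarize_after_safe_to_skip}. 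The only remaining source is a second-look on $B$ itself: since $B$ is valid and sits in some honest replica's complete block tree, the erasure-coding uniqueness property of \Cref{sec:avid} guarantees that any $f+p+1$ certified fragments for $B$'s tag decode to $B$'s original payload, so this second-look notarizes $B$ rather than the timeout block. Consequently, honest notarization votes for $C$ total at most $p$, giving a grand total of at most $p+f < n-f-p$ by \eqref{eq:n_lower} --- contradicting the existence of a notarization certificate for $C$.

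\textbf{Slow path case.} By the Quorum Size Property for the finalization certificate, at least $n-f-p-f'$ honest replicas broadcast finalization votes for $B$ via lines \ref{line:exit1_start}--\ref{line:condition_final_vote}. Each such replica had $\textit{notarized}\subseteq\{B\}$ at the moment of broadcasting and immediately exits slot $v$ with $\textit{done}=\true$, so never issues a notarization vote for any $C \ne B$ in that slot. Intersecting the $\ge n-f-p-f'$ honest finalization-voters for $B$ with the $\ge n-f-p-f'$ honest notarization-voters for $C$ inside the $n-f'$ honest replicas gives intersection size at least $n - 2f - 2p - f'$, which is at least $1$ by \eqref{eq:n_lower} and $f' \le f$. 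Any honest replica in the intersection both finalization-voted for $B$ and notarization-voted for $C$, contradicting the observation just made.

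\textbf{Main obstacle.} The harder case is the fast path, because honest notarization votes can originate from four different lines of the main loop and each source must be ruled out independently. The most delicate point is ensuring that a second-look on the valid block $B$ itself cannot fall through to a timeout notarization; this is where the erasure-coding robustness --- uniqueness of the decoded payload across any $f+p+1$ certified fragments for a given tag --- is essential, and it implicitly relies on consistency of the parent chain in earlier slots, which in a full write-up would fold into an inductive safety argument over slots. The slow path case reduces cleanly to a standard quorum-intersection argument.
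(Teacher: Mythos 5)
Your proof is correct and follows essentially the same route as the paper's: the same split into fast and slow finalization, the same use of Lemmas \ref{lemma:fast_final_implication}, \ref{lemma:safe_to_notarize} and \ref{lemma:safe_to_skip} to eliminate the possible sources of honest notarization votes for $C$ in the fast case, and the same quorum-intersection argument via the $\textit{notarized}\subseteq\{B\}$ condition in the slow case. The only difference is presentational: you tally contributions from all four vote-emitting sites (explicitly covering the case where a notarization vote for $C$ arrives bundled with a first vote, which the paper's two-bullet case split leaves implicit), whereas the paper picks a single honest voter for $C$ and case-splits on which line emitted its vote.
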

\begin{proof}
We first prove the Incompatibility of Notarization and \textit{Fast} Finalization. Suppose towards contradiction that for slot $v$ a fast finalization certificate exists for block $B$ and a notarization certificate exists for block $C$, with $C\neq B$. By \Cref{lemma:fast_final_implication} at most $p$ honest replicas first vote for anything other than $B$. By the Quorum Size Property (see \Cref{sec:sigs}) at least $(n - f - p) - f \ge f + p + 1$ honest replicas must send a notarization vote for $C$.  Thus, at least $f+1$ honest replicas must send a notarization vote for $C$ \textit{after} having sent a first vote.

\begin{itemize}
    \item 
On the one hand, suppose that a notarization vote for $C$ was sent by the protocol on line~\ref{line:invoke_reconstruct_and_notarize}. 
Due to the condition on line~\ref{line:condition_many}, this means that $f + p + 1$ first votes were received for a block $D$. 
Note that $D$ is a non-timeout block.
Moreover, by the logic of $\proc{ReconstructAndNotarize}$, either $C$ is a timeout block or $D=C$.
By the Uniqueness of Fast Finalization Property (\Cref{lemma:safe_to_notarize}) we know that $D = B$. 
Since $B$ is assumed to be valid, $C$ cannot be a timeout block, so we also have $B=D=C$, a contradiction.
\item On the other hand, suppose that a notarization vote for $C$ was sent on line~\ref{line:notarize_after_safe_to_skip}. By the Absence of Fast Finalization Property (\Cref{lemma:safe_to_skip}), this implies that no block is fast finalized, again a contradiction.
\end{itemize}

The Incompatibility of Notarization and Finalization Property follows
from a standard quorum intersection argument, based on the fact that 
in each slot an honest replica issues a finalization vote only for a block only if it did not send a notarization vote for a different block in that slot (see line~\ref{line:condition_final_vote}).
Suppose towards contradiction that for slot $v$ a finalization certificate exists for block $B$ and a notarization certificate exists for block $C$, with $C\neq B$.
By the Quorum Size property (see \Cref{sec:sigs}) for finalization and notarization certificates, 
if $f'\le f$ is the number of corrupt replicas,
then  at least $n-f-p-f'$
honest replicas broadcast finalization votes for $B$,
and a disjoint set of at least the same number of honest replicas
broadcast notarization votes for $C$.
This implies that there are at least $2(n-f-p-f')$ distinct honest replicas.
However, under the assumption that $n \ge 3f+2p+1$, we have $2(n-f-p-f') \ge n-f'+1$,
a contradiction.
\end{proof}

We can now easily state and prove our main safety lemma:

\begin{lemma}[Safety]
\label{lemma:safety}
Suppose a replica $P$ explicitly finalizes a block $B$ for slot $v$, and a block $C$ for slot $w \geq v$ is in the complete block tree of some replica $Q$. Then $B$ is an ancestor of $C$ in $Q$’s complete block tree.
\end{lemma}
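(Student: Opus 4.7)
The plan is to prove this by induction on $w-v$, using the Incompatibility of Notarization and (Fast) Finalization Property (Lemma~\ref{lemma:incompatibility_notarization_finalization}) as the main engine at each step, together with the validation rules for blocks entering the complete block tree.

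For the base case $w = v$, the block $C$ in $Q$'s complete block tree must have an associated notarization certificate. Since $B$ is explicitly finalized (either via a finalization certificate or a fast finalization certificate) at slot $v$, Lemma~\ref{lemma:incompatibility_notarization_finalization} forces $C = B$, and thus $B$ is trivially an ancestor of $C$ (in the reflexive sense).

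For the inductive step $w > v$, I would invoke the validation conditions from \Cref{sec:validating_block}: the block $C$ could only be added to $Q$'s complete block tree if its parent $C'$ already appears there in some earlier slot $w' < w$, \emph{and} the pool contains timeout certificates for every slot $w'+1, \ldots, w-1$. Now apply Lemma~\ref{lemma:incompatibility_notarization_finalization} once more: since $B$ is finalized at slot $v$, no timeout certificate can exist for slot $v$ in any replica's pool (a timeout certificate is a notarization certificate for $B_v^{\lit{timeout}} \ne B$). Consequently $v \notin \{w'+1,\dots,w-1\}$, and since $w > v$, this pins down $w' \ge v$. Applying the induction hypothesis to $C'$ at slot $w'$ yields that $B$ is an ancestor of $C'$ in $Q$'s complete block tree, and since $C'$ is the parent of $C$, we conclude $B$ is an ancestor of $C$.

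The main obstacle is the inductive step: one must correctly exploit the fact that the validation rule forces a timeout certificate for \emph{every} skipped slot, and then leverage the already-proved incompatibility between a finalized block and a timeout certificate in the same slot. Everything else is bookkeeping. One subtle point worth stating explicitly is that Lemma~\ref{lemma:incompatibility_notarization_finalization} applies whether $B$ was explicitly finalized via the fast or slow path, since the lemma is stated uniformly; this is what lets us dispatch both the base case and the exclusion of a timeout certificate for slot $v$ with a single citation.
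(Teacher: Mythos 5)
Your proof is correct and follows essentially the same route as the paper's: induction along the parent chain, using \Cref{lemma:incompatibility_notarization_finalization} both to exclude any timeout certificate for slot $v$ and to force the block at slot $v$ to equal $B$. One minor imprecision: the requirement of timeout certificates for slots $w'+1,\ldots,w-1$ is not part of the rule for adding $C$ to the complete block tree but is checked by the honest replica that first votes for $C$, so you need the bridge (which the paper states explicitly) that a notarization certificate for $C$ implies, by the Quorum Size Property, that some honest replica issued such a vote and hence had verified those timeout certificates.
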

\begin{proof}
By the Incompatibility of Notarization and (Fast) Finalization Property (\Cref{lemma:incompatibility_notarization_finalization}), no timeout certificate for slot $v$ can be produced. Let $C'$ be the parent of $C$ and suppose $w'$ is the slot number of $C'$. Since $C'$ is in $Q$’s complete block tree, a notarization certificate for $C'$ must have been produced, which means at least one honest replica must have issued a notarization vote for $C'$, which means $v \leq w' < w$. The inequality $v \leq w'$ follows from the fact that there is no timeout certificate for slot $v$, and an honest replica will issue a notarization share for $C$ only if it has timeout certificates for slots $w' + 1, \cdots, w - 1$. If $v = w'$, we are done by the Incompatibility of Notarization and (Fast) Finalization Property (\Cref{lemma:incompatibility_notarization_finalization}), and if $v < w'$, we can repeat the argument inductively with $C'$ in place of $C$. 
\end{proof}

\subsection{Liveness}

Liveness follows from the following lemmas.
The first lemma analyzes the optimistic case
where the network is synchronous and the leader of a given slot
is honest, showing that the leader's block will be committed.

\begin{lemma}[Liveness I]
\label{lemma-live1}
Consider a slot $v \ge 1$ and suppose the leader 
for slot $v$ is an honest replica $Q$.
Suppose that the first honest replica $P$ to
enter the loop iteration for slot $v$ does so at time $T_0$. 
Suppose that the network is $\delta$-synchronous
over the interval $[T_0, T_0+4\delta]$ for some $\delta$ with $\Delta\RM{timeout} \ge 2\delta$.
Then, $Q$ will propose a block for slot $v$ by time $T \le T_0+\delta$. Each honest replica will finish the loop iteration before time
$T + 2\delta$ by adding $Q$'s proposed block $B$ to its complete block tree.
Moreover, if $n-p$ replicas are honest, each honest replica will finalize $B$ by time $T+2\delta$. If more than $p$ replicas are corrupt, each honest replica will finalize $B$ by time $T+3\delta$.
\end{lemma}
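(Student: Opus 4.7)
The plan is to march forward in four $\delta$-intervals anchored at $T_0$, using the Completeness Properties (\Cref{lemma:cert_comp}, \Cref{lemma:block_comp}) to propagate state between honest replicas. Since $P$ entered slot $v$ at $T_0$, it had already seen either a block for slot $v-1$ in its complete block tree or a timeout certificate for slot $v-1$ in its pool; both kinds of objects reach $Q$ by time $T_0 + \delta$ under $\delta$-synchrony, so $Q$ enters slot $v$ and immediately broadcasts $\lit{BlockProp}(B, f_i, \pi_i)$ via lines~\ref{line:prop_start}--\ref{line:prop_end} at some time $T \le T_0 + \delta$.

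Next I would show that, by time $T + \delta$, every honest replica broadcasts a first vote for $B$ (and only for $B$). Each such replica receives $\lit{BlockProp}(B, f_j, \pi_j)$ by $T + \delta \le T_0 + 2\delta$; since its local timer began no earlier than $T_0$ and $\Delta\RM{timeout} \ge 2\delta$, the timeout branch on line~\ref{line:time_start} has not fired, so line~\ref{line:fvblock_start} fires on $B$. The ``special'' notarize-timeout clause on line~\ref{line:special_start} also stays silent at honest replicas, because all honest first votes concentrate on $B$, leaving $\proc{allVotes}(\textit{firstVote}) - \proc{maxVotes}(\textit{firstVote}) \le f < f + p + 1$. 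By $T + 2\delta$ each honest replica has therefore collected $\ge n - f$ first votes for $B$, each carrying a certified fragment and a notarization share; using $n - f \ge f + p + 1$ and the honesty of $Q$ it can decode a valid payload, assemble a notarization certificate, and (with $B$'s parent and any intervening timeout certificates already propagated to it by the completeness lemmas) add $B$ to its complete block tree. It then exits slot $v$ via lines~\ref{line:exit1_start}--\ref{line:exit1_end} and, having notarized only $B$, also broadcasts $\lit{FinalVote}(B, \sigma_j)$.

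Finally I would split into the two finalization regimes. If $n - p$ replicas are honest, the $\ge n - p$ first votes already present at time $T + 2\delta$ aggregate into a fast-finalization certificate for $B$, giving fast finalization by $T + 2\delta$. Otherwise the $\ge n - f \ge n - f - p$ finalization votes broadcast by time $T + 2\delta$ reach every honest replica by $T + 3\delta$, yielding a finalization certificate for $B$ and explicit finalization by $T + 3\delta$. The main obstacle I expect is the boundary-timing interaction that forces honest replicas to process the leader's proposal strictly before either their local timeout on line~\ref{line:time_start} or the ``special'' timeout-notarization on line~\ref{line:special_start} can fire; both of those depend on the precise bound $\Delta\RM{timeout} \ge 2\delta$ and on the fact that all honest first-vote mass piles up on $B$. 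Everything else is quorum arithmetic against $n \ge 3f + 2p + 1$.
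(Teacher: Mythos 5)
Your proposal is correct and follows essentially the same route as the paper's proof: completeness properties to get $Q$ proposing by $T_0+\delta$, all honest first votes landing on $B$ before the timeout or the special timeout-notarization clause can fire, reconstruction and notarization of $B$ by $T+2\delta$, and then the two finalization regimes (fast certificate from $n-p$ first votes, or finalization certificate one $\delta$ later). The only cosmetic difference is that the paper also explicitly notes $\proc{manyVotes}(\textit{firstVote})\subseteq\{B\}$ to rule out second-look notarization votes for other blocks, which your ``all honest first-vote mass piles up on $B$'' observation covers implicitly.
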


\begin{proof}
By the Completeness Properties (\Cref{lemma:cert_comp} and \Cref{lemma:block_comp}), before time $T_0+\delta$, 
each honest replica
will enter  slot $v$ by time $T_0+\delta$,
having either a timeout certificate for slot $v-1$  or 
a block for slot $v-1$ in its complete block tree.
Before time $T \le T_0+\delta$, the leader $Q$ will propose a block $B$
that extends a block $B'$ with slot number $v' < v$.
By the logic of the protocol, we know that $Q$ must have timeout certificates for slots $v'+1, \ldots, v-1$ at the time
it makes its proposal, as well as a notarization certificate for $B'$.
Again by the Completeness Properties, before time $T+\delta$,
each honest replica will have $B'$ in its
complete block tree and all of these timeout certificates
in its certificate pool. 
Each honest replica will receive $B$ before this time,
and because $\Delta\RM{timeout} \ge 2\delta$,
will broadcast a first vote for $B$ by this time. 
Because all honest replicas broadcast a first vote for $B$,
each such replica will only see ever see at most $f$ first votes for any other block.
It follows that each honest replica will only ever see
\begin{itemize}
\item 
 $\proc{manyVotes}(\textit{firstVote}) \subseteq \{B\}$, and
\item
\(
\proc{allVotes}(\textit{firstVote})-\proc{maxVotes}(\textit{firstVote}) \le 
\proc{allVotes}(\textit{firstVote})-\mathit{count}_B \le f,
\)
where $\mathit{count}_B$ is the number of of first votes for $B$ that it sees.
\end{itemize}
Therefore, 
honest replicas will not broadcast a  notarization votes in slot $v$ for anything other than $B$.
Before time $T+2\delta$, each honest replica will
have added $B$ to its complete block tree and broadcast a finalization vote on $B$. If $n-p$ replicas are honest, each honest replica will have added a fast finalization certificate to its pool by time $T+2\delta$ as well.
Otherwise, if more than $p$ replicas are corrupt, each honest replica will finalize $B$ before time $T+3\delta$, when adding the finalization certificate for slot $v$ to its pool.
\end{proof}

The second lemma analyzes the pessimistic case,
when the network is asynchronous or the leader of a given
slot is corrupt.
It says that eventually, all honest replicas will move on to the next slot.

\begin{lemma}[Liveness II]
\label{lemma-live2}
Suppose that the network is $\delta$-synchronous over an interval
$[T, T+\Delta\RM{timeout}+3\delta]$,
for an arbitrary value of $\delta$,
and that at time $T$, some honest replica is in the loop iteration
for slot $v$ and all other honest replicas are in a loop iteration
for $v$ or a previous slot.
Then, before time $T+\Delta\RM{timeout}+3\delta$,
all honest replicas exit slot $v$.
\end{lemma}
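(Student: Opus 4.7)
My plan is to first set up a propagation timeline using the Completeness Properties (\Cref{lemma:cert_comp} and \Cref{lemma:block_comp}), next identify which branch of the main loop triggers at each honest replica, and finally apply a vote-counting argument together with the binding property of the erasure code to force a certificate into every honest pool. Since the honest replica that is in slot $v$ at time $T$ has already exited slot $v-1$ with either a notarization certificate plus a block in its tree or a timeout certificate, Completeness places every honest replica in slot $v$ before time $T+\delta$. The $\Delta\RM{timeout}$-based clause at \cref{line:time_start} then forces every honest replica to broadcast a first vote (for a valid leader proposal or for $B_v^{\lit{timeout}}$) before time $T+\delta+\Delta\RM{timeout}$, and another application of Completeness delivers every honest first vote---together with the parent block of any non-timeout block so voted---to every honest pool and tree before time $T+2\delta+\Delta\RM{timeout}$.

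I would then show that, by this time, at each honest replica $P$ one of the two ``action'' branches fires. Let $h_B$ denote the number of honest first votes for block $B$ and let $f' \le f$ be the actual number of corrupt replicas, so $\proc{allVotes}(\textit{firstVote}) \ge n-f'$ at $P$. If $\proc{allVotes}(\textit{firstVote}) - \proc{maxVotes}(\textit{firstVote}) \ge f+p+1$ then safe-to-skip at \cref{line:special_start} fires and $P$ broadcasts $\lit{NotarVote}(B_v^{\lit{timeout}},\ldots)$; otherwise $\proc{maxVotes}(\textit{firstVote}) \ge (n-f')-(f+p) \ge n-2f-p \ge f+p+1$, so some non-timeout block $B^*_P$ sits in $\proc{manyVotes}(\textit{firstVote})$, its parent is in $P$'s tree by the preceding step, the second-look branch at \cref{line:second_start} fires, and $P$ broadcasts a notarization vote for $B^*_P$ on successful reconstruction or for $B_v^{\lit{timeout}}$ on failure. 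Thus every honest replica has broadcast at least one notarization vote, and Completeness delivers all such votes to every honest pool before time $T+3\delta+\Delta\RM{timeout}$.

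The crux is that these votes must coalesce into a certificate. Let $N$ be the set of honest replicas that never broadcast $\lit{NotarVote}(B_v^{\lit{timeout}},\ldots)$. For $Q \in N$ the safe-to-skip inequality must fail at $Q$'s pool, and since the non-max first votes at $Q$ include every honest first vote not for $B^*_Q$, I obtain $h_{B^*_Q} \ge (n-f')-(f+p)$. If two non-timeouters had distinct heavy blocks, the two disjoint honest first-vote counts would sum to at least $2\bigl((n-f')-(f+p)\bigr)$; since $n \ge 3f+2p+1$ implies $n > 2f+2p+f'$ for every $f' \le f$, this strictly exceeds the total honest first-vote budget $n-f'$, a contradiction. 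Hence all $Q \in N$ share a common $B^*$, and $h_{B^*} \ge n-f'-f-p \ge f+p+1$, so $B^* \in \proc{manyVotes}(\textit{firstVote})$ at every honest pool and second-look for $B^*$ fires everywhere. By the binding of Merkle commitments any $f+p+1$ certified fragments decode to the same payload, and the correctness predicate is deterministic along the fixed parent chain; hence if one $Q \in N$ reconstructed $B^*$ successfully (which it must, else it would have broadcast timeout) then every honest replica does, so $B^*$ gathers $\ge n-f' \ge n-f-p$ notarization votes, producing a notarization certificate, entering every honest complete block tree, and triggering exit at \cref{line:exit1_start}. If instead $N = \emptyset$, then $\ge n-f' \ge n-f-p$ honest timeout notarization votes assemble into a timeout certificate, triggering exit at \cref{line:exit2_start}. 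The main obstacle is this uniqueness-of-$B^*$ step, which uses the quorum bound $n \ge 3f+2p+1$ tightly; the rest of the plan is a routine application of the Completeness Properties and the determinism of erasure-coded decoding.
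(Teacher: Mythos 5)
Your overall strategy is sound and genuinely different from the paper's. The paper splits on a \emph{global} property of the vote pattern --- whether some non-timeout block receives $f+p+1$ honest first votes (its case (a)) or not (case (b)) --- and in case (a) exploits the fact that \emph{all} honest replicas call \proc{ReconstructAndNotarize} on the same block and deterministically agree on success or failure, so that either a notarization certificate or a timeout certificate forms. You instead split on a \emph{per-replica} property (the set $N$ of replicas that never issue a timeout vote), and your uniqueness-of-$B^*$ argument via disjoint honest first-vote budgets is a nice, correct piece of quorum arithmetic that the paper does not need. However, there is a real timing gap in your $N=\emptyset$ branch: $N$ is defined over the entire execution, so $N=\emptyset$ only tells you that every honest replica \emph{eventually} issues a timeout vote, not that it does so by $T+\Delta\RM{timeout}+2\delta$. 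A replica whose reconstruction succeeds at its second look may issue its timeout vote only much later, when drip-fed corrupt first votes finally push $\proc{allVotes}-\proc{maxVotes}$ over the threshold; in such an execution the timeout certificate need not assemble before the deadline (the actual exit happens via a notarization certificate, which your $N=\emptyset$ branch never argues). The fix is to anchor the dichotomy at a fixed time --- e.g., let $N$ be the replicas that have not issued a timeout vote by $T+\Delta\RM{timeout}+2\delta$ --- after which both branches of your argument go through essentially verbatim.

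A second, smaller omission: your vote counts ($n-f'$ second-look notarization votes for $B^*$, or $n-f'$ timeout votes) implicitly assume that no honest replica exits slot $v$ before contributing its vote. An honest replica can exit early (e.g., a certificate assembled from pre-$T$ votes or from corrupt shares arrives), and for $p=0$ the loss of even one honest contribution drops the count below the $n-f-p$ quorum. The paper handles this explicitly with the observation that if any honest replica exits slot $v$ by $T+\Delta\RM{timeout}+2\delta$, then by \Cref{lemma:cert_comp} and \Cref{lemma:block_comp} its exit condition reaches everyone within $\delta$ and the claim follows anyway; you should add the same escape hatch before assuming all honest replicas remain in the slot. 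With these two repairs your proof is correct and arguably more informative than the paper's, since it pins down which single block can absorb the non-timeout quorum.
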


\begin{proof}
By the Completeness Properties (\Cref{lemma:cert_comp} and \Cref{lemma:block_comp}), every honest replica will enter 
the  slot $v$ before time $T+\delta$. 
By time $T+\delta+\Delta\RM{timeout}$, every honest replica will
broadcast a first vote either for a block proposal, or for $B_v^{\lit{timeout}}$.

Consider two cases:
\begin{enumerate}[(a)]
    \item At least $f+p+1$ honest replicas broadcast a first vote for the same non-timeout block $B$.\label{itemlive1}
    \item No set of $f+p+1$ honest replicas broadcast a first vote for the same non-timeout block $B$.\label{itemlive2}
\end{enumerate}

\medskip
\noindent
{\em Case (\ref{itemlive1}).} Since $f+p+1$ replicas cast notarization votes for $B$, some honest replica did so. Since this replica had to have $B$'s parent in its complete block tree, by Completeness Properties (\Cref{lemma:block_comp}) each honest replica will have $B$'s parent in its complete block tree by time $T+\Delta\RM{timeout}+2\delta$. 
All honest replicas observe all first votes from other honest replicas before time $T+\Delta\RM{timeout}+2\delta$, and so each honest replica will call $\proc{ReconstructAndNotarize}(v,B)$
before that time, unless it has already exited slot $v$.
If some honest replica has exited slot $v$ before that time, the by Completeness Properties 
(\Cref{lemma:cert_comp} and \Cref{lemma:block_comp}), all honest replicas
will exit the slot before time $T+\Delta\RM{timeout}+3\delta$.
Otherwise, assume no honest replica has exited before time $T+\Delta\RM{timeout}+2\delta$.
Whenever $\proc{ReconstructAndNotarize}(v,B)$ is called by some honest replica, it has received at least $f+p+1$ first votes for $B$, and  so can attempt to reconstruct $B$. 
If it fails to reconstruct $B$'s payload or finds that it is invalid, then it and all honest replicas will do so and issue a timeout vote (this follows from collision resistance of the hash function
and the fragment decoding logic).
Otherwise, each honest replica will issue a notarization vote for $B$.
Therefore, before time $T+\Delta\RM{timeout}+3\delta$, each honest replica will either add $B$ to its complete block tree or the timeout certificate to its pool, and proceed exit the slot.

\medskip
\noindent
{\em Case (\ref{itemlive2}).} Consider some honest replica $P$. By time $T+\Delta\RM{timeout}+2\delta$, $P$ will have observed all votes of other honest replicas. 
If the only entries in $\textit{firstVote}$ are those from honest replicas,
then 
the inequality $\proc{allVotes}(\textit{firstVote})-\proc{maxVotes}(\textit{firstVote})\ge f+p+1$ must  hold.
To see this, if $f' \le f$ is the number of corrupt replicas,
then 
\begin{align*}
\proc{allVotes}(\textit{firstVote})-\proc{maxVotes}(\textit{firstVote}) & \ge (n - f') -(f + p) \\ 
& \ge n - 2f - p \\
& \ge f+p+1 \quad \text{(since $n \ge 3f+2p+1$).}
\end{align*}
As we have already observed, the quantity $\proc{allVotes}(\textit{firstVote})-\proc{maxVotes}(\textit{firstVote})$ cannot decrease as we add entries to $\textit{firstVote}$.
Therefore 
if $P$ has not cast a timeout vote in slot $v$ yet, it will do so by time $T+\Delta\RM{timeout}+2\delta$,
unless it has already exited slot $v$ by that time.
In either case, 
all honest replicas will exit slot $v$ by time $T+\Delta\RM{timeout}+3\delta$.
\end{proof}

\subsection{Boundedness}
\label{sec:bounds}

We prove some simple results that allow us to bound message and storage complexity.
Here, we are assuming both (\ref{eq:n_lower}) and (\ref{eq:n_upper}).

Let us consider notarization votes 
on non-timeout blocks.
An honest replica sends a first vote for at most one such block. 
Any notarization vote for some other non-timeout block $B$  requires that the replica found $B \in \proc{manyVotes}(\mathit{firstVotes})$. 
In other words, the replica has seen at least $f+p+1$ other replica's first votes for $B$. 
At most one first vote per replica is considered when computing $\proc{manyVotes}(\mathit{firstVotes})$. 
Therefore, an honest replica can cast at most $\lfloor n/(f+p+1)\rfloor$ notarization votes beyond the first vote, and by (\ref{eq:n_upper}), $\lfloor n/(f+p+1)\rfloor \le 2$, for a total of 3 notarization votes
for non-timeout blocks.

Suppose there are $f' \le f$ corrupt replicas, and so $n-f'$ honest replicas.
The honest replicas therefore issue at most $3(n-f')$ notarization votes for non-timeout blocks
per slot.

Now, to construct a notarization certificate
for a non-timeout block $B$, 
we require that $(n-f-p-f')$ honest replicas cast a notarization vote for $B$.
Therefore, by the result in the previous paragraph, there can be at most
\begin{equation}
\label{eq:N}
    N \deq \lfloor (3(n-f'))/(n-f-p-f') \rfloor
\end{equation}
distinct blocks for which a notarization certificate can be constructed.

We claim that $N \le 5$.
To see this, first note that
the derivative of $(3(n-f'))/(n-f-p-f')$ with respect to $f'$
is positive, and therefore the right-hand side of (\ref{eq:N}) is maximized when $f=f'$,
and so
\[
    N \le \lfloor (3(n-f))/(n-2f-p) \rfloor .
\]
So it suffices to show that $(3(n-f))/(n-2f-p) < 6$.
This is easily seen to follow by a simple calculation using (\ref{eq:n_lower}).

So to summarize, we have shown that in any slot, 
\begin{enumerate}
\item 
each replica casts a notarization vote for at most $2$ non-timeout blocks besides its first vote, and
\item 
there are at most $5$ distinct blocks for which a notarization certificate can be constructed. 
\end{enumerate}
This immediately gives us bounds on the message and storage complexity of the protocol per slot.

\subsection{Complexity}

Based on the concrete bounds in \Cref{sec:bounds} (and the
preliminary discussion in \Cref{sec:pool}), it is easily seen that
the message complexity per slot is $O(n^2)$.
Based on the properties of erasure codes and Merkle trees discussed in \Cref{sec:avid},
the communication complexity per slot is $O(\beta n + n^2 \log(n) \kappa + n^2 \lambda)$,
where $\beta$ is a bound on the payload size,  $\kappa$ is the output length of
the collision-resistant hash, and $\lambda$ is a bound on the length of any
signature share or certificate.
Moreover, the communication is {\em balanced}, in that every replica, {\em including the leader},
transmits the same amount of data, up to a constant factor.
It is also easily seen that each replica needs to store $O(\beta  + n \log(n) \kappa + n \lambda)$
bits of data.

\section{Protocol Variations}

Our protocol can be adapted to the setting of $n \ge 3f + 2p^* -1$, where $p^* \ge 1$ ~\cite{abraham2021good}, with the insight that if honest replicas vote for different blocks in the same slot, the leader has to be corrupt. 
The liveness analysis can leverage the fact that, in this case, honest replicas will observe at least $n-f$ votes from non-leader replicas, as is done in~\cite{banyan}. We plan to analyze a variation of \prot\ with this adaptation in follow-up work.

A variation of DispersedSimplex~\cite{song} features segments of consecutive slots with the same leader. Such stable leader assignment is beneficial for the throughput of the protocol. The same technique can be applied to our protocol, and we plan to analyze a variation of \prot\ featuring stable leaders in follow-up work.

\bibliographystyle{splncs04}
\bibliography{refs.bib}

\end{document}